\documentclass[12pt]{article}
\usepackage{amssymb,amsmath,amsthm,a4wide}
\usepackage{url,graphics,graphicx}

\advance\textheight by 1.8cm
\voffset=-1.6cm

\DeclareMathOperator{\C}{C}
\DeclareMathOperator{\ceil}{ceil}
\newcommand{\alg}[1]{\hspace{1pt}\operatorname{\mathtt{#1}}\hspace{1pt}}
\newcommand{\varalg}[1]{\operatorname{\scalebox{.92}{$\mathrm{#1}$}}}


\renewcommand{\mid}{\mathop{|}}

\newtheorem{definition}{Definition}
\newtheorem{theorem}{Theorem}
\newtheorem{proposition}[theorem]{Proposition}
\newtheorem{lemma}[theorem]{Lemma}
\newtheorem{corollary}[theorem]{Corollary}
\newtheorem*{corollary*}{Corollary}
\bibliographystyle{plain} 

\begin{document}

\title{Optimal probabilistic polynomial time compression and the Slepian-Wolf theorem: 
tighter version and simple proofs}
\author{Bruno Bauwens\footnote{
National Research University Higher School of Economics (HSE),
Faculty of Computer Science,
Kochnovskiy Proezd 3, Moscow, 125319 Russia,
E-mail: \texttt{Brbauwens at gmail dot com}
  }
}

\date{}

\maketitle  

\vspace{-.8cm}

\begin{abstract}
  \noindent
  We give simplify the proofs of the 2 results in Marius Zimand's paper 
  {\em Kolmogorov complexity version of Slepian-Wolf coding, proceedings
  of STOC 2017, p22--32.} 
  The first is a universal polynomial time compression algorithm: 
  on input $\varepsilon > 0$, a number $k$ and a string~$x$, computes in 
  polynomial time with probability $1-\varepsilon$ a program of length
  $k + O(\log^2 (|x|/\varepsilon))$ that outputs~$x$, provided that there 
  exists such a program of length at most~$k$.
  The second result, is a universal distributed compression algorithm,
  in which several parties each send some string to a common receiver. 
  Marius Zimand proved a variant of the Slepian-Wolf theorem 
  using Kolmogorov complexity (in stead of Shannon entropy). 
  With our simpler proof we improve the parameters of Zimand's result.
\end{abstract}

\section{Generating almost shortest programs in polynomial time.}

Assume there exists a program of length at most $k$ that (without input) produces $x$.
Given~$x$, can we find such a program quickly? A trivial way would be to run
all programs of length at most $k$ in parallel and output the first one that
produces~$x$.  Compressing $x$ in this way, requires an
exponentially longer time than the fastest decompression program.
Can we compress faster? Can we compress even faster than the smallest decompression time? 

It was known that many strings can be compressed rapidly using pseudo random
generators and we explain the technique in the next paragraph. Under some
computational complexity assumption this implies that strings exist who can be
compressed much faster than the time required by any decompression algorithm.

In \cite{Zimand2017STOC}, improving results 
from~\cite{BauwensShortlistJournal,BauwensLinearList,teutschShort}, 
Marius Zimand gave a probabilistic algorithm that compresses {\em every} string with
high probability to an almost shortest description.  
This implies the unconditional existence of strings that can be compressed in
polynomial time by a randomized algorithm, but can not be decompressed in for
example double exponential time, or even in a time bounded by the Busy Beaver
function of half of the length.


Let $\C(x|z)$ represent the Kolmogorov complexity of~$x$ conditional to~$z$. 
(See~\cite{LiVitanyiForthEdition,bookShenVereshchagin} for background on Kolmogorov complexity.)

\begin{theorem}[Zimand, STOC 2017]\label{th:fastCompression}
  There exist a deterministic algorithm $\alg{D}$ and a probabilistic algorithm 
  that on input $\varepsilon \in (0,1]$, $x$ and $k\ge C(x|z)$ produces with probability $1-\varepsilon$ 
  in time polynomial in $|x|/\varepsilon$, 
  a string $p$ of length $k + O(\log^2 (|x|/\varepsilon))$ such that ${\alg D}(p,z) =x$. 
\end{theorem}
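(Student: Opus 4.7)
The plan is to hash $x$ obliviously (without knowledge of $z$) with a randomized hash $h_\sigma : \{0,1\}^n \to \{0,1\}^m$ of output length $m := k + O(\log(1/\varepsilon))$, controlled by a short random seed $\sigma$ of length $O(\log^2(n/\varepsilon))$ (with $n := |x|$). Writing $S_k := \{x' : \C(x' \mid z) \leq k\}$, we have $|S_k| \leq 2^{k+1}$ and $x \in S_k$. The compressor samples $\sigma$ uniformly, evaluates $y := h_\sigma(x)$ in polynomial time, and outputs the program $p := \la n, \varepsilon, \sigma, y\ra$ (with a prefix-free encoding of the parameters); its total length will be $k + O(\log^2(n/\varepsilon))$.

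The decoder $\alg{D}$, on input $(p, z)$, parses $p$ into $(n, \varepsilon, \sigma, y)$, enumerates $S_k$ semi-decidably by running every program of length $\leq k$ on a universal machine with input $z$ in a dovetailed fashion, and returns the \emph{first} $x' \in S_k$ encountered with $h_\sigma(x') = y$. The decoder is deterministic but not polynomial time, which is permitted. Correctness reduces to the quantitative requirement that for every fixed $S \subseteq \{0,1\}^n$ with $|S| \leq 2^{k+1}$ and every $x \in S$,
\[
\Pr_\sigma\bigl[\,\exists\, x' \in S\setminus\{x\}\colon h_\sigma(x') = h_\sigma(x)\,\bigr] \;\leq\; \varepsilon,
\]
since then, applied to $S = S_k$, with probability $\geq 1-\varepsilon$ over $\sigma$ the string $x$ is the unique preimage of $y$ in $S_k$, so the decoder's first match equals $x$. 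Obliviousness of the hash is crucial: the compressor needs no access to $z$.

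The main obstacle is constructing such a hash family with seed length $O(\log^2(n/\varepsilon))$. A pairwise-independent family satisfies the collision bound by Markov, but has $\Omega(n)$-bit seeds, which is far too long. One therefore invokes an explicit $(2^{k+1}, \varepsilon/4)$-randomness extractor (equivalently, an unbalanced bipartite expander of the Guruswami--Umans--Vadhan / Capalbo--Reingold--Vadhan--Wigderson type) whose seed has length $O(\log^2(n/\varepsilon))$, and reads its first $m$ output bits as the hash. The averaging-sampler interpretation of the extractor will imply that for all but an $\varepsilon$-fraction of seeds $\sigma$, the image $h_\sigma(x)$ has expected preimage count in $S$ strictly below $2$, hence is a singleton. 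The anticipated simplification over Zimand's original proof lies in this clean separation between a derandomized hash used as a black box and a straightforward enumeration-based decoder, side-stepping the more delicate online matching and list-decoding analysis of~\cite{Zimand2017STOC}.
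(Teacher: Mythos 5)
There is a genuine gap, and it sits exactly at the step you flag as the ``main obstacle'': the quantitative requirement you reduce to --- for \emph{every} $S\subseteq\{0,1\}^n$ with $|S|\le 2^{k+1}$ and \emph{every} $x\in S$, a collision-free hash with probability $1-\varepsilon$ --- cannot be met by any family with seed length $O(\log^2(n/\varepsilon))$ once $k$ is larger than that. A counting argument shows this: for each fixed seed $\sigma$, at most $2^m$ strings of $\{0,1\}^n$ are the unique preimage of their hash value, so at most roughly $2^{m+1}$ strings $x$ can be ``lonely'' for a $(1-\varepsilon)$-fraction of the $D=2^{O(\log^2(n/\varepsilon))}$ seeds. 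Pick any other $x$; for more than $\varepsilon D$ seeds it has a colliding partner, and collecting one partner per such seed gives a set $S$ of size at most $1+\varepsilon D\le 2^{k+1}$ (whenever $k\gtrsim\log^2(n/\varepsilon)$) on which your requirement fails. So the $\Omega(n)$ (really $\Omega(k)$) seed length of pairwise independence is not an artifact you can remove by plugging in an extractor. Relatedly, the sentence ``the averaging-sampler interpretation of the extractor will imply that for all but an $\varepsilon$-fraction of seeds the preimage count in $S$ is below $2$'' is incorrect on two counts: extractor/sampler guarantees hold only for all but a small number of elements of $S$, never for a designated $x$ in an arbitrary $S$; and with output length $k+O(\log(1/\varepsilon))$ the bound one actually gets for good outputs carries the degree factor, namely about $D|S|/(\varepsilon M)=2^{\Theta(\log^2(n/\varepsilon))}$ candidates, not a singleton. (Even granting a small expectation, ``expected count below $2$, hence a singleton'' is a non sequitur without a Markov bound of the form $1+\varepsilon$.)

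The two ingredients your plan is missing are precisely what the paper (and Zimand) use. First, one exploits that the relevant $S$ is not arbitrary but a complexity ball $\{u:\C(u|z)\le k\}$: the extractor lemma only says that fewer than $2^{\C(x|z)-c}$ elements of $S$ are exceptional (``poor''), and a Kolmogorov-complexity argument --- if $x$ were among them, it could be described by its index in an enumeration of the poor elements and would be too compressible --- shows that \emph{your} $x$ is not exceptional. Second, one accepts that the extractor value of length $k$ does not isolate $x$ but only reduces the candidate set to size $2^{O(\log^2(n/\varepsilon))}$ (equivalently, $\C(x\mid f(x,i),z)\le O(\log^2(n/\varepsilon))$), and then a \emph{second}, independent fingerprint of length $O(\log^2(n/\varepsilon))$ (in the paper, $x\bmod p$ for a random prime $p$ of that bit length) isolates $x$ within this small residual set; per-element isolation with a short seed is possible exactly because the set being hashed at that stage has quasi-polynomial size. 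Your enumeration-based decoder and the length accounting of $p=\la n,\varepsilon,\sigma,y\ra$ are fine; it is the one-shot isolation of $x$ inside a set of size $2^{k+1}$ with a polylogarithmic seed that cannot work, and the two-stage structure is not an avoidable complication but the way around an information-theoretic obstruction.
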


\noindent
Note that the probabilistic compression algorithm does not use~$z$. 
Moreover, we can assume that this algorithm uses at most $O(\log^2 (|x|/\varepsilon))$ random bits.

\subsection{Compressing a string using pseudorandomness.}

Imagine Alice wants to send $x$ to Bob using a minimal number of bits.
If we assume that Alice has unlimited computational resources and even access to the halting 
problem, she can compute a shortest program for $x$ and communicate $C(x)$ many bits. 
Now assume Alice can only compute in probabilistic polynomial time,
and Bob's has any computational resources available.

Let $|x|=n$. If additionally Alice and Bob know~$C(x)$, and they have shared randomness, Alice can communicate $x$ using 
$C(x) + O\left(\log \tfrac{n}{\varepsilon}\right)$~bits. 
Let $r_1, \dots, r_{\ell}$ be a list of $\ell = \C(x) + \ceil\log \tfrac{2n}{\varepsilon}$ random strings of length~$n$,
Alice sends $n$ and the bits $b_i = x \cdot r = \sum_j x_j r_{i,j} \bmod 2$ 
for $i = 1, \dots, \ell$.
Clearly, Alice can do this in time polynomial in $n/\varepsilon$.

With probability $1-\varepsilon$, Bob recovers~$x$ as follows:
he enumerates all $n$-bit strings $y$ with $\C(y) \le \C(x)$.
Note that there are less than $2^{\C(x)+1}$ such strings. 
For each enumerated~$y$ he checks 
whether $y \cdot r_j = b_j$ for all $j$.
He outputs the first~$y$ for which this is true.

Why does Bob recover $x$ with probability~$1-\varepsilon$? 
If $x$ and $y$ are different $n$-bit strings, then for a random string $w$, 
the condition $x \cdot r = y \cdot r$ holds with probability~$1/2$. 
The probability that this true for all strings $r_j$, is at most~$2^{-\ell} \le \varepsilon 2^{-\C(x)-1}$.
The algorithm considers less than $2^{\C(x)+1}$ strings~$y$. 
Hence, by the union bound, the probability that these bits match for some $j$
is at most~$\varepsilon$.

\medskip
To prove Theorem~\ref{th:fastCompression}, the main challenge is to get rid of
the public randomness.  People familiar with communication complexity may
suggest to apply the Newman trick to convert public randomness to private
randomness.  Unfortunately, this would make Alice's algorithm non-computable.

Another technique would be to apply the strategy above using pseudo random strings.
More precisely, Alice generates a random seed of logarithmic size, and uses a pseudorandomness 
generator to compute strings $r_j$.
Then she sends the parity bits and the random seed to Alice. 
Bob can now reconstructed the $r_j$, and hence also the string~$x$.
However, the strings $r_j$ are not random but pseudo random and they can only be used
with computations that are weak enough to be fooled by these strings.
Remarkably, Theorem~\ref{th:fastCompression} holds for all strings and even
without using any assumption from computational complexity.

\subsection{Overview of the proof}

The proof has two parts. 
First we show that extractor functions 
map strings $x$ of small Kolmogorov complexity to a lists such that most strings 
$y$ in the list can almost reconstruct~$x$, in other words, $\C(x|y)$ is very small.
In the second part, we use hashing to refine this mapping into a ``bijective'' one.
We start with the second part.

\medskip
For any probabilistic algorithm $\alg A$, let $\varalg{A}_{u,v,\dots,z}$ denote the random variable 
that represents the outcome of the algorithm when it is run on input $(u,v,\dots,z)$.
In the two parts described above we obtain the following results. 

\newcommand{\showTheProposition}{
  There exists a probabilistic polynomial time algorithm $\alg F$ that on input $\varepsilon > 0, k, x$ 
   outputs a $k$-bit string such that with probability $1-\varepsilon$:
  \[
  \C(x|{\varalg F}_{\varepsilon,k,x},z) \;\le\; \max \left( 0, C(x|z) - k \right) + {O}( \log^{2} \tfrac{|x|}{\varepsilon} ).
  \]
  }

\begin{proposition}\label{prop:extractorSmallComplexity}
  \showTheProposition
\end{proposition}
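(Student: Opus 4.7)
The plan is to have $\alg F$ output a random seed together with a pseudorandom hash of~$x$, so that the seed is itself part of the $k$-bit output and is therefore available to the ``decoder.'' Set $d = O(\log^{2}(|x|/\varepsilon))$, $n=|x|$, and let $\{h_s : \{0,1\}^n \to \{0,1\}^{k-d}\}_{s \in \{0,1\}^d}$ be a family of hash functions obtained from a strong extractor with polylogarithmic seed length (e.g.\ Trevisan's extractor), satisfying the following almost-pairwise-independence property: for every $x \in \{0,1\}^n$ and every set $B \subseteq \{0,1\}^n$,
\[
\Pr_s\!\left[\,\left|\{x' \in B : h_s(x') = h_s(x)\}\right| > \tfrac{2}{\varepsilon}\max(1,|B|/2^{k-d})\,\right] \;\le\; \varepsilon.
\]
On input $(\varepsilon, k, x)$ the algorithm samples $s \in \{0,1\}^d$ uniformly and outputs the pair $(s, h_s(x))$, a string of total length~$k$.

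To verify the bound, set $K = \C(x|z)$ and $B_K = \{x' \in \{0,1\}^n : \C(x'|z) \le K\}$; enumeration gives $|B_K| \le 2^{K+1}$. Applying the hashing property to $B_K$, with probability at least $1-\varepsilon$ over the random seed the collision set $L = \{x' \in B_K : h_s(x') = h_s(x)\}$ has size at most $(2/\varepsilon)\max(1, 2^{K+1-(k-d)})$. Given the output $y = (s, h_s(x))$ and the string $z$, we describe $x$ by writing $K$ in $O(\log n)$ bits and the index of $x$ in the enumeration of $L$ in at most $\max(0, K+1-(k-d)) + \log(2/\varepsilon) + O(1)$ bits; the seed and the hash value are read off $y$. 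This gives
\[
\C(x | y, z) \;\le\; \max(0, K - k + d) + O(\log(n/\varepsilon)) \;=\; \max(0, \C(x|z) - k) + O(\log^{2}(n/\varepsilon)),
\]
after absorbing $d = O(\log^2(n/\varepsilon))$ into the overhead.

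The main obstacle is exhibiting a hash family with the required almost-pairwise-independence at seed length only $O(\log^{2}(n/\varepsilon))$. Classical pairwise-independent families, and Toeplitz-plus-$\delta$-biased constructions, both require seed $\Omega(n+k)$, so polylogarithmic seed length forces us to use a more powerful pseudorandom object such as an explicit extractor. Furthermore, the collision bound must hold worst-case in $x$ and not merely on average over $x$, so deriving it from the average-case closeness-to-uniform property of a standard strong extractor is delicate and calls for a careful sampling or list-decoding style argument.
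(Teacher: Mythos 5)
There is a genuine gap, and it sits exactly where you deferred the work. The hash-family property you postulate --- for \emph{every} $x$ and \emph{every} $B\subseteq\{0,1\}^n$, with probability $1-\varepsilon$ over a seed of length $d=O(\log^2(n/\varepsilon))$ the collision set of $x$ inside $B$ has size at most $(2/\varepsilon)\max(1,|B|/2^{k-d})$ --- is not merely hard to extract from a strong extractor; it is false for these parameters, for any family. Fix $x$ and note that for all but a negligible fraction of $x$ the preimage $h_s^{-1}(h_s(x))$ has far more than $2/\varepsilon$ elements for every seed $s$ (the $2^{k-d}$ cells partition $\{0,1\}^n$, so small cells cover a vanishing fraction of points). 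An adversary then picks, for each seed $s$, some $\lceil 2/\varepsilon\rceil+1$ strings colliding with $x$ under $h_s$, and lets $B$ be the union over all seeds; then $|B|\le 2^{d}\,O(1/\varepsilon)\ll 2^{k-d}$, so the threshold is $2/\varepsilon$, yet the threshold is exceeded for every seed, i.e.\ the failure probability is $1$, not $\varepsilon$. Since in the application $B$ is a conditional-complexity ball $\{x':\C(x'|z)\le K\}$ with $z$ arbitrary, such adversarial structure cannot be excluded, so your lemma cannot be invoked even ``in practice.''

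What is actually achievable with short seeds is the weaker statement that for each fixed $B$ all but fewer than $2^{k}$ ``poor'' strings enjoy the collision bound --- this is precisely what the paper derives in a few lines from the extractor property (Lemma~\ref{lem:extractorsAreGood}, Zimand's rich-owner property). The missing idea, and the heart of the paper's proof of Proposition~\ref{prop:extractorSmallComplexity}, is how to rule out that your particular $x$ is one of the poor exceptions: one takes $B=S=\{u:\C(u|\tilde z)\le\C(x|\tilde z)\}$, observes that the poor elements of $S$ can be enumerated and number less than $2^{\C(x|\tilde z)-c}$, so if $x$ were poor it could be described by its index in that enumeration, contradicting $\C(w)\le\C(w|\C(w))+O(1)$ for large $c$. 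No purely combinatorial ``worst-case in $x$'' hashing lemma can substitute for this incompressibility step. The remainder of your argument (carrying the seed inside the $k$-bit output, describing $x$ by its index among colliding elements of the complexity ball, and the final arithmetic absorbing $d$ into the $O(\log^2(n/\varepsilon))$ term) is sound and matches the paper, but without the poor/rich dichotomy plus the Kolmogorov-complexity argument the proof does not go through.
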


\begin{proposition}\label{prop:hashing}
  There exists a deterministic algorithm $\alg D$ and a probabilistic algorithm $\alg G$
  that on input $\varepsilon > 0$, $b$, and $x$ computes in time polynomial in $|x|/\varepsilon$ 
  a string of length $O(b + \log (|x|/\varepsilon))$
  such that for all $w$ for which $C(x|w) < b$:
  \[
  \Pr \left\{ \alg D(w,\varalg{G}_{\varepsilon,b,x},z) = x \right\}
  \;\;\ge\;\; (1-\varepsilon) 
  \]
\end{proposition}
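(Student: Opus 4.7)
The plan is a one-shot hashing argument, the only subtle point being that the description of the hash must itself fit in the short output. Set $n = |x|$ and $m = \Theta(b + \log(n/\varepsilon))$ with a sufficiently large implied constant. The compressor $\alg G$ samples a random hash function $h\colon \{0,1\}^n \to \{0,1\}^m$ from a small-seed almost-universal family (see below), and outputs the concatenation of the description of $h$ and the value $h(x)$, of total length $O(m) = O(b + \log(n/\varepsilon))$. The deterministic decoder $\alg D$, on input $w$ and a pair $(h,v)$, dovetails an enumeration of all strings $y$ with $\C(y|w) < b$ by running the reference universal machine on every $w$-conditional program of length less than $b$, and outputs the first enumerated $y$ with $h(y) = v$.

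The hypothesis $\C(x|w) < b$ guarantees that $x$ itself will eventually appear in this enumeration and satisfy $h(x) = v$, so $\alg D$ terminates with some matching string. Correctness then reduces to bounding the probability, over the random choice of $h$, that some $y \neq x$ with $\C(y|w) < b$ also hashes to $v$. There are fewer than $2^b$ such strings, so the union bound demands a per-pair collision probability of at most $\varepsilon \cdot 2^{-b}$; this is what fixes the choice of $m$.

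For the concrete family I would use standard polynomial hashing over $\mathbb F_{2^m}$: split $x$ into $\lceil n/m\rceil$ blocks of $m$ bits, interpret them as coefficients of a polynomial $P_x$, and set $h_{\alpha,\beta}(x) = P_x(\alpha) + \beta$ for a uniformly random seed $(\alpha,\beta) \in \mathbb F_{2^m}^2$. The per-pair collision probability is $O(n/(m\cdot 2^m))$, and the choice $m = \Theta(b + \log(n/\varepsilon))$ makes this at most $\varepsilon \cdot 2^{-b}$; the seed has length $2m$ and the hash value $m$ bits, so the total output respects the required bound and is computable in time polynomial in $n/\varepsilon$. The main obstacle is exactly this seed-length accounting: a naive pairwise-independent family such as random affine maps would spend $\Omega(n)$ bits on the description of $h$ and blow the output budget. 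Polynomial hashing is the construction that simultaneously delivers a short seed, polynomial-time evaluation, and a small enough collision probability.
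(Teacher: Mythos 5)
Your proposal is correct and follows essentially the same route as the paper: a short-seed fingerprint of $x$ sent together with its seed, a decoder that enumerates the fewer than $2^b$ strings of conditional complexity below $b$ and outputs the one matching the fingerprint, and a union bound requiring per-pair collision probability at most $\varepsilon 2^{-b}$ --- the paper merely instantiates the fingerprint as $x \bmod p$ for a random prime $p$ of bit length $O(b+\log(|x|/\varepsilon))$ rather than your polynomial hash over $\mathbb{F}_{2^m}$. Just make sure $\alg{G}$'s output explicitly encodes $|x|$ and $b$ (as the paper's does), so the decoder knows the enumeration bound and restricts itself to strings of length $|x|$; otherwise, e.g., $x$ padded with a zero block defines the same polynomial as $x$ and collides with probability one.
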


\begin{proof}[Proof of  Theorem~\ref{th:fastCompression}.]
  Let $b(\varepsilon,x) = c+c\log \tfrac{|x|}{\varepsilon}$ where $c$ is large enough 
  such that $b$ exceeds the logarithmic term in Proposition~\ref{prop:extractorSmallComplexity}.
  $\varalg{E}_{\varepsilon,k,x}$ is given by 
  an efficient coding of the pair 
  $\left(\varalg{G}_{\varepsilon,b(\varepsilon,x),x}, \varalg{F}_{\varepsilon,k,x}\right)$, 
  where $\alg{F}$ and $\alg G$ are obtained from Propositions~\ref{prop:extractorSmallComplexity} and~\ref{prop:hashing}. 
  We conclude that on input $\varalg{E}_{\varepsilon,k,x}$ and $z$, algorithm~$\alg D$ outputs~$x$  
  with probability~$(1-\varepsilon)^2$.  
  Theorem~\ref{th:fastCompression} follows after rescaling~$\varepsilon$.
\end{proof}

\subsection{Proof of  Proposition~\ref{prop:hashing}}

{\em Algorithm $\alg G$} outputs a descriptions of the following 4 numbers: 
$|x|$, $b$, a random prime $p$ with bit length at most 
$2b+2\log\left(|x|/\varepsilon\right)+\tilde{c}$ 
(the constant $\tilde{c}$ will be determined later), 
and the value $x \bmod p$, where $x$ is interpreted as a number in binary.
These numbers are encoded using space~$O(b + \log (|x|/\varepsilon))$. 

\medskip
\noindent
{\em Algorithm $\alg D$} 
enumerates all $u$ of length $|x|$ such that $\C(u \mid w,z) < b$.
If some $u$ is found such that $u = x \bmod p$, then $\alg D$ outputs this~$u$ and halts, 
otherwise, $\alg D$ runs for ever.  

\medskip
Note that there are less than $2^b$ strings $u$ with $\C(u \mid \dots) < b$. 
To prove the inequality of the proposition, apply 
the following for $n = |x|$ and $s = 2^{b}$. 

\begin{quote}
 If $x, u_1, \dots, u_s$ are different natural numbers less than $2^n$ and $P$ 
 is a set of at least $sn/\varepsilon$ prime numbers, 
 then for a fraction $1-\varepsilon$ of primes $p$ in $P$:
 $x \bmod p \;\not\in\; \{u_1 \bmod p, \dots, u_s \bmod p\}$.
\end{quote}

\noindent
This claim is true because $x - u_1$ has at most $n$ different prime factors $p \ge 2$. 
Thus, $x-u_1 = 0 \bmod p$ for at most a fraction $\varepsilon/s$ of $p \in P$.
The same is true for $u_2, \dots, u_s$, thus the total fraction of bad primes is at most~$\varepsilon$.

\medskip
By the prime number theorem, the $t$th prime is at most $O(t \log t)$ and hence, 
its bitwise representation is at most $2\log t$ for large~$t$.
Thus, for some large $\tilde{c}$, the set $P$ of primes of bit length
$2b + 2\log (|x|/\varepsilon) + \tilde{c}$, contains at least $2^{b}|x|/\varepsilon$ primes.
The claim above implies that with probability $1-\varepsilon$, the value $x \bmod p$
will be unique in the set of all $u$ with small enough conditional complexity,
and hence, $\alg D$ outputs~$x$ with probability~$1-\varepsilon$.

\subsection{Proof of  Proposition~\ref{prop:extractorSmallComplexity}}

For a positive integer $n$, let $[n] = \{1,2,\dots, n\}$.

\begin{definition}\label{def:extractor}
A function $f\colon [N] \times [D] \rightarrow [M]$ is a {\em $(k,\varepsilon)$-extractor} if 
for all $A \subseteq [N]$ of size at least $2^k$ and for all $B \subseteq [M]$:
\[
\left| \Pr\left\{ f(X_A,X_D) \in B \right\} - \frac{|B|}{M} \right| \le \varepsilon,
\]
where $X_A$ and $X_{D}$ represent random values in $A$ and $[D]$.
\end{definition}

In the proof of  Proposition~\ref{prop:extractorSmallComplexity}, 
$\alg F$ is the algorithm that on input $x$ returns $f(x,i)$ for a random $i \in [D]$ and 
for a suitable extractor~$f$. 
The following proposition shows that with
high probability the complexity of $x$ is small relative to $\alg F$'s output.

\begin{lemma}\label{lem:extractorSmallComplexity}
  Let $\tilde{z} = (z,\varepsilon,f)$.
  For all $x$ such that $f\colon [N] \times [D] \rightarrow [M]$, is a $(\C(x|\tilde{z})-c,\varepsilon)$-extractor
  and for at least an $(1-2\varepsilon)$-fraction of $i$ in~$[D]$:
  \[
  \C(x \mathop{|} f(x,i),\tilde{z}) \,\;\le\;\, \C(x {\mid} \tilde{z}) - \log M + O(\log (D|x|/\varepsilon)).
  \]
\end{lemma}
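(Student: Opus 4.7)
The plan is to apply the $(k-c,\varepsilon)$-extractor with a test set that captures the image of $f(x,\cdot)$, and then use Markov's inequality to isolate a $(1-2\varepsilon)$-fraction of good indices for which $x$ admits a short description via its index in a small preimage.

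First I would let $k := \C(x\mid\tilde z)$ and construct $A \subseteq [N]$ containing $x$, enumerable from $\tilde z$ and $k$, with $|A| \in [2^{k-c},\, 2^{k+1}]$. The natural choice is $A_0 = \{x' \in [N] : \C(x'\mid\tilde z) \le k\}$, which satisfies the upper bound by the standard counting of short descriptions; if $|A_0|$ falls short of $2^{k-c}$ one pads $A_0$ with the first canonically-ordered strings outside it. Set $B_x := \{f(x,i) : i \in [D]\} \subseteq [M]$, so $|B_x| \le D$. Applying the extractor to $A$ with $B = B_x$ gives $\Pr_{(x',i) \in A \times [D]}\bigl[f(x',i) \in B_x\bigr] \le D/M + \varepsilon$. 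Since $f(x',i) = f(x,i)$ implies $f(x',i) \in B_x$, writing $N_i(y) := |\{x' \in A : f(x',i) = y\}|$, this translates to
\[
\mathbb{E}_{i \in [D]}\bigl[N_i(f(x,i))\bigr] \;=\; |A|\cdot\Pr_{(x',i)}[f(x',i) = f(x,i)] \;\le\; |A|(D/M + \varepsilon).
\]
Markov's inequality then yields $N_i(f(x,i)) \le L := |A|(D/M + \varepsilon)/(2\varepsilon)$ for at least a $(1-2\varepsilon)$-fraction of $i \in [D]$.

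For each such good $i$, I would describe $x$ conditional on $f(x,i)$ and $\tilde z$ by a program encoding $i$ (in $\lceil\log D\rceil$ bits), $k$ (in $O(\log|x|)$ bits), and the index of $x$ within the preimage $\{x' \in A : f(x',i) = f(x,i)\}$ (in $\lceil\log L\rceil$ bits). The decoder enumerates $A$ from $\tilde z$ and $k$, filters by the predicate $f(x',i) = f(x,i)$, and outputs the element at the indicated index. Using $\log|A| \le k+1$ together with the elementary bound $\log\bigl((D/M + \varepsilon)/(2\varepsilon)\bigr) \le -\log M + O(\log D + \log(1/\varepsilon))$, the total program length simplifies to $k - \log M + O(\log(D|x|/\varepsilon))$, establishing the claim.

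The main obstacle will be the construction of $A$ with $|A| \ge 2^{k-c}$: the natural set of low-complexity strings may be too small, forcing a padding step whose description must be absorbed into the $O(\log(D|x|/\varepsilon))$ slack; choosing $c$ appropriately takes care of this. A minor bookkeeping point is the log-arithmetic step, which needs a short case split on whether $\varepsilon \le D/M$ (where the $-\log M$ savings emerges directly from $\log(D/M + \varepsilon) \le 1 + \log D - \log M$) or $\varepsilon > D/M$ (where $\log M$ is absorbed into $O(\log(1/\varepsilon))$ within the error term).
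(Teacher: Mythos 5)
Your single application of the extractor with the test set $B_x=\{f(x,i):i\in[D]\}$ does not deliver the stated bound in the main regime. The extractor only gives $\Pr_{(x',i)\in A\times[D]}\left[f(x',i)\in B_x\right]\le D/M+\varepsilon$, so after Markov your preimage bound is $L=|A|(D/M+\varepsilon)/(2\varepsilon)$. When $\varepsilon\ge D/M$ this is at least $|A|/2$, and your description of $x$ has length about $\C(x|\tilde z)+\log D+O(\log|x|)$, with no $-\log M$ savings at all. Your proposed repair for that case --- ``$\log M$ is absorbed into $O(\log(1/\varepsilon))$'' --- is backwards: $\varepsilon>D/M$ gives only the lower bound $M>D/\varepsilon$ and no upper bound on $M$. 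Concretely, with $M=2^n$, $D=2$, $\varepsilon=1/4$, your claimed inequality $\log\bigl((D/M+\varepsilon)/(2\varepsilon)\bigr)\le-\log M+O(\log D+\log(1/\varepsilon))$ reads $-1\le-n+O(1)$, which is false for large $n$. This regime is precisely the one needed in the application: in Proposition~\ref{prop:extractorSmallComplexity} one takes $M=2^k$ with $k$ up to $|x|$ while $\log D=O(\log^2(|x|/\varepsilon))$, so the loss is not a bookkeeping issue but the whole content of the lemma. (A minor further point: your padding of $A_0$ by the first strings \emph{outside} it is not enumerable from $\tilde z,k$ since $A_0$ is only c.e.; padding is in any case unnecessary, as all strings of length $\C(x|\tilde z)-O(1)$ already lie in $A_0$.)

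The root of the problem is that the additive error $\varepsilon$ in the extractor guarantee swamps $D/M$, so no choice of threshold in your Markov step can push the collision probability with $x$ below $\varepsilon$ when the test set is the fiber image $B_x$. The paper keeps the $-\log M$ term by choosing the test set globally: $B$ is the set of outputs $y$ whose preimage inside $S=\{u:\C(u|\tilde z)\le\C(x|\tilde z)\}$ exceeds $(1/\varepsilon)$ times the average $D|S|/M$; edge counting gives $|B|\le\varepsilon M$, the extractor property then bounds the number of ``poor'' elements (those hitting $B$ for more than a $2\varepsilon$ fraction of seeds) by $2^{\C(x|\tilde z)-c}$, and a complexity argument shows $x$ is not poor: otherwise $x$ could be described by its index in the enumeration of poor elements, giving $\C(x|\tilde z,\C(x|\tilde z))\le\C(x|\tilde z)-c+O(1)$, a contradiction for large $c$. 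For the remaining $(1-2\varepsilon)$ fraction of seeds, the preimage of $f(x,i)$ in $S$ has size at most $D|S|/(\varepsilon M)$, which yields $\C(x|\tilde z)-\log M+O(\log(D|x|/\varepsilon))$ uniformly in all parameter ranges. If you want to keep your framing, you must measure heaviness of outputs over all of $S\times[D]$ rather than over the fibers of $x$ alone; the complexity-contradiction step then becomes unavoidable, because the extractor only limits how many elements are poor, and you still need to certify that $x$ is not one of them.
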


\noindent
Later we choose $f$ from a computable parameterized family so that 
it can be described by at most $O(\log (|x|/\varepsilon))$ bits. 
Hence, $f$'s presence in the condition of the complexity terms 
change them only by additive logarithmic terms. Without loss of generality, 
we can assume that $\varepsilon$ in  Theorem~\ref{th:fastCompression} 
satisfies $\C(\varepsilon) \le O(\log (1/\varepsilon)) \le O(\log (|x|/\varepsilon))$ as well.
In the proof of the proposition we use a slight variant of a property of extractor graphs 
that Marius Zimand calls ``rich owner property''. Here it is defined in a complementary way.

\begin{definition}\label{def:poorElements}
  Let $f\colon [N] \times [D] \rightarrow [M]$ and $S \subseteq [N]$. 
  \begin{itemize}
    \item 
      $y \in [M]$ is called {\em $b$-bad} for $S$ if 
      \[
       \left| f^{-1}(y) \cap S \right| \; > \; bD|S|/M,
       \]
      (thus the size exceeds $b$ times the average size for an element in $[M]$).
    \item 
      $x \in S$ is called {\em $\varepsilon$-poor} in $S$ if for more than
      a $2\varepsilon$-fraction of $i \in [D]$ the value $f(x,i)$ is $(1/\varepsilon)$-bad for~$S$.
  \end{itemize}
\end{definition}

\begin{lemma}\label{lem:extractorsAreGood}
  For every $(k,\varepsilon)$-extractor and set $S$, 
  less than $2^k$ elements in $S$ are $\varepsilon$-poor in~$S$.
\end{lemma}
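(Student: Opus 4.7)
The proof will go by a double-counting contradiction against the extractor property. First I would bound the number of ``globally bad'' targets. Set $B = \{y \in [M] : y \text{ is } (1/\varepsilon)\text{-bad for } S\}$. Each $y \in B$ contributes more than $D|S|/(\varepsilon M)$ to the sum $\sum_{y \in [M]} |f^{-1}(y) \cap S|$, which equals $D|S|$ since each $x \in S$ has exactly $D$ pairs $(x,i)$. Comparing, I obtain $|B|/M < \varepsilon$.

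Next, let $A \subseteq S$ be the set of $\varepsilon$-poor elements, and suppose for contradiction that $|A| \ge 2^k$. By the definition of $\varepsilon$-poor, for each $x \in A$ strictly more than a $2\varepsilon$-fraction of $i \in [D]$ satisfies $f(x,i) \in B$, so averaging over $x \in A$ yields
\[
\Pr\{f(X_A, X_D) \in B\} > 2\varepsilon.
\]
On the other hand, since $|A| \ge 2^k$ and $f$ is a $(k,\varepsilon)$-extractor,
\[
\Pr\{f(X_A, X_D) \in B\} \;\le\; \frac{|B|}{M} + \varepsilon \;<\; \varepsilon + \varepsilon \;=\; 2\varepsilon,
\]
a direct contradiction. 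Hence fewer than $2^k$ elements of $S$ are $\varepsilon$-poor.

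There is no real obstacle here: the argument is entirely a matching of the two thresholds baked into Definition~\ref{def:poorElements}. The slightly delicate point is to notice that the constant $1/\varepsilon$ in the notion of ``$b$-bad'' is chosen precisely to force $|B|/M < \varepsilon$ by Markov-style counting, while the factor $2$ in the $2\varepsilon$-fraction used for ``$\varepsilon$-poor'' is chosen precisely to beat the extractor error $\varepsilon$ plus this density bound. Once these two quantitative choices are aligned, the contradiction drops out of one line of the extractor inequality applied to $A$ and $B$.
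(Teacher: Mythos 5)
Your proof is correct and follows essentially the same route as the paper: take $B$ the set of $(1/\varepsilon)$-bad targets (with the Markov-type count giving $|B|\le\varepsilon M$), take $A$ the set of $\varepsilon$-poor elements, note $\Pr\{f(X_A,X_D)\in B\}>2\varepsilon$, and conclude $|A|<2^k$ from the extractor property. You merely spell out the counting bound on $|B|$ that the paper states as a parenthetical remark, so nothing further is needed.
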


\begin{proof}
  Let $f\colon [N] \times [D] \rightarrow [M]$ be a $(k,\varepsilon)$-extractor and $S \subseteq [N]$. 
  Choose $B \subseteq [M]$ to be the set of $(1/\varepsilon)$-bad elements for~$S$,
  (note that $|B| \le \varepsilon M$),
  and $A \subseteq [N]$ the set of $\varepsilon$-poor elements $x$ in $S$. 
  $\Pr \left\{ f(X_A, X_D) \in B \right\} \ge 2\varepsilon$ by choice of~$A$.
  Thus, $|A| < 2^k$ by definition of extractor graphs. 
  In other words, the number of $\varepsilon$-poor elements in $S$ is less than~$2^k$.
\end{proof}

\begin{proof}[Proof of  Lemma~\ref{lem:extractorSmallComplexity}.]
  Let $S = \left\{u \colon \C(u|\tilde{z}) \le \C(x|\tilde{z})\right\}$. Obviously, $x \in S$ and 
  $S$ can~be enumerated given $\tilde{z}$ and $\C(x|\tilde{z})$. 
  In a similar way, the $\varepsilon$-poor elements in $S$ can be enumerated.

  We show that {\em $x$ is not $\varepsilon$-poor in~$S$.} Assume $x$ were $\varepsilon$-poor.
  By  Lemma~\ref{lem:extractorsAreGood} the logarithm of the number of poor
  elements is at most $\C(x|\tilde{z})-c$, 
  and we can describe $x$ by its index in the list of $\varepsilon$-poor elements. Hence:
  \[
  \C(x|\tilde{z}, \C(x|\tilde{z})) \le \C(x|\tilde{z}) - c + O(1).
  \]
  The term in the left-hand is at least $\C(x|\tilde{z}) - O(1)$ by 
  the relativized version of the inequality $\C(w) \le \C(w | \C(w))+O(1)$. 
  We showed that $\C(x|\tilde{z}) - O(1) \le \C(x|\tilde{z}) - c$ 
  and this is a contradiction for large $c$. Hence, $x$ is not $\varepsilon$-poor.

  \smallskip
  For at least a $(1-2\varepsilon)$ fraction of $i$, 
  the value $f(x,i)$ is not $(1/\varepsilon)$-bad in $S$. We show that for such~$i$, the 
  complexity bound of  Lemma~\ref{lem:extractorSmallComplexity} is satisfied.
  Given $\tilde{z}$, $\C(x|\tilde{z})$ and $f(x,i)$ we can enumerate~$f^{-1}(f(x,i)) \cap S$,
  and describe $x$ by its index in this enumeration. Let us bound the size of this set.
  By definition of $(1/\varepsilon)$-bad the size of this set is at most $\frac{D|S|}{\varepsilon M}$, 
  and $|S| \le 2^{\C(x|\tilde{z})+1}$ by choice of~$S$. To the index of $x$, we need to prepend a description of
  $\C(x|\tilde{z})$ which requires $O(\log |x|)$ bits. In total, the length of the description of $x$
  satisfies the bound of  Lemma~\ref{lem:extractorSmallComplexity}.
\end{proof}

We apply  Lemma~\ref{lem:extractorSmallComplexity} to 
polynomial time computable extractor functions which are given in \dots.  
For a function $f$ whose values are strings of at least length $k$, let $f^{[k]}$ be the function 
obtained by taking the $k$-bit prefixes of values of $f$.
It is not hard to adapt the definition of extractors for functions from strings to strings.

\begin{theorem}\label{th:existMonotoneExtractors}
  There exists a polynomial time computable family of functions
  $f_{n,\varepsilon}: \{0,1\}^n \times [D] \rightarrow \{0,1\}^n$ with $D \le 2^{O(\log^2 (n/\varepsilon))}$
  such that $f_{n,\varepsilon}^{[k]}$ is a $(k,\varepsilon)$-extractor for all $k \le n$.
\end{theorem}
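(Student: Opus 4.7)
The plan is to instantiate $f_{n,\varepsilon}$ with an explicit extractor construction whose $k$-bit prefix is automatically a $(k,\varepsilon)$-extractor for every $k \le n$. A natural candidate is Trevisan's extractor,
\[
f(x, s) \;=\; \bigl(\bar{x}[S_1(s)],\, \bar{x}[S_2(s)],\, \ldots,\, \bar{x}[S_n(s)]\bigr),
\]
where $\bar{x}$ denotes a polynomial-time computable encoding of $x$ under a list-decodable error-correcting code, and $(S_1, \ldots, S_n)$ is a Nisan--Wigderson weak design with each $S_i \subseteq [d]$ and $d = O(\log^2(n/\varepsilon))$; the seed $s$ indexes the shift or offset applied to a base design.

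The key structural observation is that the $k$-bit prefix $f^{[k]}$ is precisely the Trevisan extractor built from the initial segment $(S_1, \ldots, S_k)$, and any initial segment of a weak design is itself a weak design with the same overlap parameter. Hence the standard reconstruction analysis applies verbatim: any distinguisher for $f^{[k]}(X_A, X_D)$ on a set $A$ with $|A| \ge 2^k$ would, via a hybrid argument, yield a next-bit predictor from which $\bar{x}$ (and therefore $x$) could be reconstructed using advice of length strictly below $k$, contradicting $|A| \ge 2^k$. This establishes the $(k,\varepsilon)$-extractor property uniformly in $k$, with the same seed alphabet of size $D = 2^d = 2^{O(\log^2(n/\varepsilon))}$.

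I expect the main subtlety to lie in the parameter calibration when the output length is pushed close to the source entropy: the total length of the reconstruction advice (design overlap plus list-decoding radius) has to fit within the $k$-bit entropy budget. This is handled by selecting the code and design parameters as functions of $n$ and $\varepsilon$ only, rather than of $k$, so that a single function $f_{n,\varepsilon}$ works for all $k$ simultaneously; standard choices such as a Reed--Muller code concatenated with Hadamard, combined with the Nisan--Wigderson design, make the whole construction polynomial-time computable. Alternatively, one may simply cite a refined construction (e.g.\ of Raz--Reingold--Vadhan or Guruswami--Umans--Vadhan) that satisfies the stated seed bound directly while preserving the monotone prefix structure.
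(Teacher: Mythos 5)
Neither you nor the paper actually proves this statement from scratch: in the paper it is imported from the literature (the citation is literally left as ``\dots''), the intended source being the Trevisan-style extractors of Raz, Reingold and Vadhan, i.e.\ exactly the family you propose. Your key structural point --- that the $k$-bit prefix of Trevisan's extractor is the Trevisan extractor built on the initial segment $(S_1,\dots,S_k)$ of the design, and that an initial segment of a weak design is again a weak design with the same overlap parameter --- is correct and is indeed the reason a single function $f_{n,\varepsilon}$ can serve all $k\le n$ simultaneously. (A small slip: the seed is the string $s\in\{0,1\}^d$ itself and the $i$-th output bit is $\bar x$ evaluated at $s$ restricted to $S_i$, not a ``shift of a base design''; this does not affect the argument.)

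The quantitative step, however, does not close as you wrote it. In the reconstruction argument, a distinguisher for the $k$-bit prefix on a set $A$ lets you describe some $x\in A$ with advice of length roughly $\sum_{j<i}2^{|S_i\cap S_j|}+O(\log(n/\varepsilon))$ plus a list-decoding index, i.e.\ at least $\rho(k-1)+\Omega(\log (n/\varepsilon))$ bits, where $\rho$ is the weak-design overlap parameter; for the designs that fit the seed bound $d=O(\log^2(n/\varepsilon))$ one has $\rho=1+\gamma$ with $\gamma>0$ (pushing $\gamma\to 0$ inflates $d$). So when the output length equals $k$, the advice is \emph{not} ``strictly below $k$'', and what the hybrid/reconstruction argument yields is that $f^{[k]}$ is a $\bigl((1+\gamma)k+O(\log(n/\varepsilon)),\varepsilon\bigr)$-extractor, not a $(k,\varepsilon)$-extractor. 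Choosing the code and design parameters as functions of $n,\varepsilon$ only buys uniformity in $k$; it does not remove this entropy-loss overhead. Eliminating it is precisely the ``extracting all the randomness'' part of Raz--Reingold--Vadhan (a block composition extracting a constant fraction of the remaining entropy at each stage), and if you invoke it you must additionally verify that the composed extractor still has the monotone prefix property --- the nontrivial check your sketch omits; the fallback of citing Guruswami--Umans--Vadhan does not help, since that construction meets the seed bound but is not known to have the prefix property, which is the whole content of the theorem. Alternatively, the statement you \emph{do} prove, namely that $f^{[k]}_{n,\varepsilon}$ is a $(k+O(\log^2(n/\varepsilon)),\varepsilon)$-extractor for all $k\le n$, already suffices for this paper: in the proof of Proposition~\ref{prop:extractorSmallComplexity} and Lemma~\ref{lem:extractorSmallComplexity} one only needs to lower the threshold $\ell$ by the same $O(\log^2(n/\varepsilon))$ amount, and the slack is absorbed into the existing $O(\log^2(|x|/\varepsilon))$ terms.
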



\begin{proof}[Proof of  Proposition~\ref{prop:extractorSmallComplexity}.]
  It suffices to prove the corollary for all $\varepsilon$ such that $\C(\varepsilon) \le O(\log (1/\varepsilon))$.
  Let $n = |x|$.

  \medskip
  \noindent
  Let $\alg F$ be the algorithm that on input $\varepsilon>0, k, x$ 
  evaluates the function $f_{n,\varepsilon}^{[k]}$ from Theorem~\ref{th:existMonotoneExtractors} 
  in the value $(x,i)$ for a random $i \in [D]$. 
   
  \medskip
  Let $f = f_{n,\varepsilon}$ and note that $\C(f) \le O(\log \tfrac{n}{\varepsilon})$.
  First assume that $k \le \C(x|\tilde{z})-c$. The conditions of Lemma~\ref{lem:extractorSmallComplexity} 
  are satisfied for the function $f^{[k]}$ with $M = 2^k$. Hence,
  \[
  \C\left(x|f^{[k]}(x,i), \tilde{z} \right) \le \C(x|\tilde{z}) - k + O(\log^2 \tfrac{n}{\varepsilon}).
  \]
  Because the complexities of $f^{[k]}$ and $\varepsilon$ are logarithmic, we obtain the same result if 
  we replace $\tilde{z}$ by $z$. In this case the corollary is proven.

  \medskip
  Let $\ell = \C(x|\tilde{z})-c$. Assume $k>\ell$. 
  In this case, we apply  Lemma~\ref{lem:extractorSmallComplexity}  to $f^{[\ell]}$
  and obtain 
  \[
  \C\left(x|f^{[\ell]}(x,i), z \right) \le c + O(\log^2 \tfrac{n}{\varepsilon}).
  \]
  This relation remains true if we replace $f^{[\ell]}$ by $f^{[k]}$ because the former is a prefix of the latter, 
  and the complexity can at most increase by a term $O(\log k)$.
   Proposition~\ref{prop:extractorSmallComplexity}  and hence  Theorem~\ref{th:fastCompression} are proven.
%
%
\end{proof}

\medskip
Proposition~\ref{prop:extractorSmallComplexity}  has another consequence that we 
use in the next section: if $x$ has large complexity, then with high probability $f(x,i)$ is random.
\begin{corollary}\label{cor:incompressibleDescription}
  If $\C(x|\tilde{z}) \ge k$, then
  with probability $1-\varepsilon$,
  $\varalg{F}_{\varepsilon,k,x}$ in  Proposition~\ref{prop:extractorSmallComplexity} satisfies:
  \[
  \C(\varalg{F}_{\varepsilon,k,x}|z) \;\,\ge\;\, k - O(\log^2 \tfrac{|x|}{\varepsilon}).
  \]
\end{corollary}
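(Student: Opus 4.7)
The plan is to combine Proposition~\ref{prop:extractorSmallComplexity} with the standard Kolmogorov chain rule $\C(x|z) \le \C(y|z) + \C(x|y,z) + O(\log |y|)$, instantiated at $y = \varalg{F}_{\varepsilon,k,x}$, which is a $k$-bit string. The intuition is that if $\alg F$ manages to compress $x$ down to roughly $\C(x|z)-k$ bits relative to its output, then its output must itself carry at least $k$ useful bits about $x$.

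First I would use the hypothesis $\C(x|\tilde z) \ge k$ to clean up the bound. Since conditioning on the additional data $\tilde z = (z,\varepsilon,f)$ can only decrease plain complexity by $O(1)$, we get $\C(x|z) \ge k - O(1)$; in particular $\max(0,\,\C(x|z)-k) \le \C(x|z) - k + O(1)$. Thus on the probability-$(1-\varepsilon)$ event guaranteed by Proposition~\ref{prop:extractorSmallComplexity}, the bound simplifies to
\[
\C(x \mid \varalg{F}_{\varepsilon,k,x},\, z) \;\le\; \C(x|z) - k + O(\log^{2}(|x|/\varepsilon)).
\]

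Next I would plug this into the chain rule. Since $\varalg{F}_{\varepsilon,k,x}$ has length $k$, the error term in the chain rule is $O(\log k)$, and we obtain
\[
\C(x|z) \;\le\; \C(\varalg{F}_{\varepsilon,k,x}|z) \,+\, \C(x \mid \varalg{F}_{\varepsilon,k,x},\, z) \,+\, O(\log k).
\]
Rearranging and substituting the previous inequality makes the $\C(x|z)$ terms cancel, leaving $\C(\varalg{F}_{\varepsilon,k,x}|z) \ge k - O(\log^{2}(|x|/\varepsilon))$, which is exactly the claim.

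I do not foresee a real obstacle: once Proposition~\ref{prop:extractorSmallComplexity} is in hand, this is a one-line chain-rule manipulation. The only point that genuinely requires the hypothesis $\C(x|\tilde z) \ge k$ is the removal of the $\max(0,\cdot)$ in Proposition~\ref{prop:extractorSmallComplexity}; without that hypothesis the statement would be false, since for trivial $x$ the output $\varalg{F}_{\varepsilon,k,x}$ could be highly compressible.
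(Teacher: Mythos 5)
Your proposal is correct and is essentially the paper's own argument: the paper's one-line appeal to ``symmetry of information'' is exactly the chain-rule inequality $\C(x|z)\le \C(\varalg{F}_{\varepsilon,k,x}|z)+\C(x|\varalg{F}_{\varepsilon,k,x},z)+O(\log k)$ that you spell out, combined with the bound of Proposition~\ref{prop:extractorSmallComplexity} on the event of probability $1-\varepsilon$. Your explicit step $\C(x|z)\ge \C(x|\tilde z)-O(1)\ge k-O(1)$ to remove the $\max(0,\cdot)$ is a correct (and welcome) elaboration of a detail the paper leaves implicit.
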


\begin{proof}
  If adding a string in the condition decreases the complexity of~$x$ by $k$, 
  then the string should have complexity at least $k-O(1)$ by symmetry of information.
\end{proof}

\section{Slepian-Wolf coding}

For a tuple $(x_1, \dots, x_\ell)$ and a set $S \subseteq [\ell]$, 
let $x_S$ be the tuple containing the strings $x_i$ with $i \in S$. 

\begin{theorem}\label{th:SlepianWolf}
  There exists a deterministic decompression algorithm $\alg D$ and 
  a probabilistic polynomial time compression algorithm $\alg E$ 
  that maps $\varepsilon > 0, \ell, k$ and a string $y$ 
  to a string of length $k + O(\ell \log^2 (|y|/\varepsilon))$ such that if 
  \begin{equation}\label{eq:conditionsSlepianWolf}
    \C(x_S| x_{[\ell]\setminus S}, z) \le \sum_{i \in S} k_i \quad\quad \text{for all }S \subseteq [\ell]
  \end{equation}
  for some $z$, $(k_1, \dots, k_\ell)$ and a tuple $x = (x_1,\dots, x_\ell)$ of strings of equal length,  then 
  with probability $1-\varepsilon$: $\alg D (E_1, \dots, E_\ell,z) = x$, 
  where $E_i$ is the output of $\alg E$ on input $\varepsilon,\ell, k_i, x_i$.
\end{theorem}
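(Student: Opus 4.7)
The plan is to combine Propositions~\ref{prop:extractorSmallComplexity} and~\ref{prop:hashing}. The compressor $\alg{E}$, on input $(\varepsilon,\ell,k,y)$, emits the pair $E=(F,G)$, where $F=\varalg{F}_{\varepsilon/(2\ell),k,y}$ is the extractor output of Proposition~\ref{prop:extractorSmallComplexity} (of length $k$) and $G=\varalg{G}_{\varepsilon/(2\ell),b,y}$ is the hash of Proposition~\ref{prop:hashing} with parameter $b:=c_{0}\,\ell\log^{2}(|y|/\varepsilon)$ for a sufficiently large constant $c_{0}$ (of length $O(b+\log(|y|/\varepsilon))=O(\ell\log^{2}(|y|/\varepsilon))$). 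The total encoding length is $k+O(\ell\log^{2}(|y|/\varepsilon))$, as required. The decompressor, given $(E_{1},\dots,E_{\ell},z)$, parses $E_{i}=(F_{i},G_{i})$, forms $w:=(F_{1},\dots,F_{\ell},z)$, and for each $i$ applies Proposition~\ref{prop:hashing}'s decoder with side-information $w$ and hash $G_{i}$ to produce a candidate for $x_{i}$.

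By Proposition~\ref{prop:hashing}, the $i$-th decoding succeeds with probability $\ge 1-\varepsilon/(2\ell)$ as soon as $\C(x_{i}\mid w)<b$. A union bound over the $\ell$ encoders' coins and the $\ell$ decoder calls reduces correctness to the following \emph{Main Claim}: with probability $\ge 1-\varepsilon/2$, $\C(x_{i}\mid F_{1},\dots,F_{\ell},z)\le b$ holds for every $i\in[\ell]$.

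I would prove the Main Claim via the stronger bound $\C(x_{[\ell]}\mid F_{[\ell]},z)\le O(\ell\log^{2}(|y|/\varepsilon))$. For any permutation $\pi$ of $[\ell]$, the chain rule combined with Proposition~\ref{prop:extractorSmallComplexity} applied to $x_{\pi(i)}$ with conditional $\tilde z_{i}:=(F_{[\ell]\setminus\{\pi(i)\}},\,x_{\pi(1)},\dots,x_{\pi(i-1)},\,z)$ yields
\[
\C(x_{[\ell]}\mid F_{[\ell]},z)\;\le\;\sum_{i=1}^{\ell}\max\bigl(0,\;\C(x_{\pi(i)}\mid\tilde z_{i})-k_{\pi(i)}\bigr)+O(\ell\log^{2}(|y|/\varepsilon)).
\]
The desired bound follows once $\pi$ is chosen so that $\C(x_{\pi(i)}\mid\tilde z_{i})\le k_{\pi(i)}+O(\log^{2}(|y|/\varepsilon))$ at every step. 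I would construct such a $\pi$ greedily: at step $i$, pick any remaining index $j$ for which $\C(x_{j}\mid F_{[\ell]\setminus\{j\}},\,x_{\pi(1)},\dots,x_{\pi(i-1)},\,z)\le k_{j}+O(\log^{2}(|y|/\varepsilon))$, and set $\pi(i):=j$.

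The main obstacle is proving that a feasible $j$ always exists. The argument would combine the Slepian-Wolf bound $\C(x_{j}\mid x_{[\ell]\setminus\{j\}},z)\le k_{j}$, Corollary~\ref{cor:incompressibleDescription} (which gives $\C(F_{i}\mid z)\ge k_{i}-O(\log^{2}(|y|/\varepsilon))$ so that conditioning on $F_{i}$ soaks up essentially $k_{i}$ bits of uncertainty), and a symmetry-of-information step of the form $\C(x_{j}\mid F_{i},z)\le \C(x_{i}\mid F_{i},z)+\C(x_{j}\mid x_{i},z)+O(\log|y|)$ that lets $F_{i}$ substitute for $x_{i}$ inside a conditional at cost $O(\log^{2}(|y|/\varepsilon))$ per substitution. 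The counterexample $x_{1}=x_{2}$ with $k_{1}=k_{2}=\C(x_{1}\mid z)/2$ shows that no ordering with $\tilde z_{i}$ built from previously decoded $x$'s alone is feasible; including the $F_{j}$'s of not-yet-decoded coordinates in $\tilde z_{i}$ is essential.
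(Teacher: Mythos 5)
Your overall architecture (compress each $x_i$ into an extractor part $F_i$ plus a hash part $G_i$, and reduce correctness to showing that $\C(x_i\mid F_1,\dots,F_\ell,z)$ is small for all $i$ with high probability) mirrors the paper's, which runs the compressor of Theorem~\ref{th:fastCompression} with inflated target $k_i+(\ell-1)b$ and reduces to $\C(x_i\mid F_{[\ell]\setminus i},z)\le k_i+(\ell-1)b$. But your proposal stops exactly at the mathematical core. The existence of a feasible index $j$ at each greedy step is the whole content of the theorem, and you explicitly leave it as "the main obstacle" with a list of ingredients that "would combine". The paper proves precisely this kind of statement as a separate induction step (Lemma~\ref{lem:SlepianWolfInductionStep}): conditioning on one $F_\ell$ preserves \emph{all} the Slepian--Wolf inequalities \eqref{eq:conditionsSlepianWolf} for the remaining strings, with $k_i$ replaced by $k_i+b$. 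Its proof needs a case split that your sketch does not contain: if $k_\ell\ge \C(x_\ell\mid x_T)$ then Proposition~\ref{prop:extractorSmallComplexity} makes $F_\ell$ a near-substitute for $x_\ell$ in the condition; if $k_\ell<\C(x_\ell\mid x_T)$ one uses symmetry of information in the subtractive form $\C(x_S\mid x_T,F_\ell)=\C(x_S,F_\ell\mid x_T)-\C(F_\ell\mid x_T)+O(\log(\ell n))$ together with Corollary~\ref{cor:incompressibleDescription} (relativized to $x_T$) to argue that $F_\ell$ is worth a full $k_\ell$ bits. The triangle-type inequality you write, $\C(x_j\mid F_i,z)\le \C(x_i\mid F_i,z)+\C(x_j\mid x_i,z)+O(\log|y|)$, only helps when $\C(x_i\mid F_i,z)$ is already small, i.e.\ in the easy case $k_i\gtrsim \C(x_i\mid z)$; it does not handle the hard case, and feasibility at later greedy steps must moreover be argued for conditions of the form $(F_{\text{remaining}},x_{\text{decoded}},z)$, which is a statement about all subsets, not about single pairs. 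Naming Corollary~\ref{cor:incompressibleDescription} is not the same as carrying out this argument.

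There are two further quantitative problems with the plan as stated. First, your permutation $\pi$ is chosen adaptively from the realized values of the $F_j$'s, so Proposition~\ref{prop:extractorSmallComplexity} and Corollary~\ref{cor:incompressibleDescription} must be invoked simultaneously for all candidate (index, prefix-set) configurations — exponentially many in $\ell$ — which forces the per-invocation error parameter down to roughly $\varepsilon 2^{-\ell}$ and inflates every additive $O(\log^2(|y|/\varepsilon))$ term accordingly; the paper absorbs exactly this union-bound cost into its definition of $b$, while your accounting ignores it. Second, the feasibility bound you demand, $\C(x_{\pi(i)}\mid\tilde z_i)\le k_{\pi(i)}+O(\log^2(|y|/\varepsilon))$ with \emph{all} other $F_j$'s in the condition and only constant (in $\ell$) per-step slack, is stronger than what the paper's argument yields: there the slack accumulates, giving $\C(x_i\mid F_{[\ell]\setminus i},z)\le k_i+(\ell-1)b$ after peeling off $\ell-1$ strings. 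You give no argument that the accumulation can be avoided, and without one the hash parameter $b$ (hence the encoding length) cannot be certified. So the proposal is a reasonable reduction plus a plausible plan, but the key lemma is missing, and the sketched tools are not assembled in the form the hard case requires.
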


\noindent
The proof proceeds by induction on $\ell$ and in the induction
step we use the lemma below.
Let $\alg F$ be the algorithm of  Proposition~\ref{prop:extractorSmallComplexity} and 
$F_i$ the result of the algorithm on input $\varepsilon,k_i,x_i$.

\begin{lemma}\label{lem:SlepianWolfInductionStep}
  Let $b = c\ell\log^2 \tfrac{|x_i|+2}{\epsilon}$ for some large $c$.
  If \eqref{eq:conditionsSlepianWolf} is satisfied for all $S \subseteq [\ell]$, 
  then with probability $1-\epsilon$, for all $S \subseteq [\ell-1]$, 
  \eqref{eq:conditionsSlepianWolf} is satisfied 
  for the tuples $(x_1,\dots, x_{\ell-1})$ and $(k_1+b, \dots, k_{\ell-1} + b)$ 
  with condition $(z,\varalg F_\ell)$ (in stead of $z$).
\end{lemma}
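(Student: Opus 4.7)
I fix $S \subseteq [\ell-1]$ and write $z' = (z, x_{[\ell-1] \setminus S})$. The aim is to show that, with high probability over the randomness of the algorithm that produces $F_\ell$,
\[
\C(x_S | z', F_\ell) \;\le\; \sum_{i \in S} k_i + O\!\left(\log^2(|x_\ell|/\epsilon)\right),
\]
which sits comfortably below $\sum_{i \in S} (k_i + b)$. A union bound over the $2^{\ell-1}$ subsets $S$, using $\epsilon/2^{\ell-1}$ as per-subset failure probability, inflates the polylog error by an additive $O(\ell^2)$ that is absorbed into $b = c\ell\log^2((|x_i|+2)/\epsilon)$ for $c$ large.

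The central identity is the Kolmogorov--Levin symmetry of information
\[
\C(x_S | z', F_\ell) \;=\; \C(x_S, F_\ell | z') - \C(F_\ell | z') \;\pm\; O(\log |x_\ell|),
\]
and the plan is to upper bound the first term and lower bound the second so that a common quantity cancels.

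For the upper bound, $F_\ell$ is computable from $x_\ell$ together with the $O(\log^2(|x_\ell|/\epsilon))$ random bits used by $\alg F$, hence $\C(x_S, F_\ell | z') \le \C(x_S, x_\ell | z') + O(\log^2(|x_\ell|/\epsilon))$. Combining the Slepian--Wolf hypothesis~\eqref{eq:conditionsSlepianWolf} for $S \cup \{\ell\}$ (which bounds $\C(x_S, x_\ell | z')$ by $\sum_{i\in S}k_i + k_\ell$) with the one for $S$ (applied after the symmetry-of-information split $\C(x_S, x_\ell | z') = \C(x_\ell|z') + \C(x_S|x_\ell, z') \pm O(\log)$, which gives $\C(x_S, x_\ell | z') \le \sum_{i\in S} k_i + \C(x_\ell|z') + O(\log)$) yields
\[
\C(x_S, x_\ell | z') \;\le\; \sum_{i \in S} k_i + \min\!\big(k_\ell,\, \C(x_\ell|z')\big) + O(\log).
\]
For the lower bound, Proposition~\ref{prop:extractorSmallComplexity} applied with condition $z'$ gives, with failure probability $\epsilon/2^{\ell-1}$,
\[
\C(x_\ell | F_\ell, z') \;\le\; \max\!\big(0,\, \C(x_\ell|z') - k_\ell\big) + O\!\left(\log^2(|x_\ell|/\epsilon)\right),
\]
and a second application of symmetry of information, with the trivial bound $\C(F_\ell | x_\ell, z') \le O(\log^2(|x_\ell|/\epsilon))$ (random-seed size), yields
\[
\C(F_\ell | z') \;\ge\; \C(x_\ell|z') - \C(x_\ell | F_\ell, z') - O(\log) \;\ge\; \min\!\big(k_\ell,\, \C(x_\ell|z')\big) - O\!\left(\log^2(|x_\ell|/\epsilon)\right).
\]
Subtracting, the $\min\big(k_\ell, \C(x_\ell|z')\big)$ terms cancel and the desired bound drops out.

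\textbf{Main obstacle.} The delicate point is the lower bound on $\C(F_\ell | z')$ in the regime $\C(x_\ell | z') < k_\ell$, which lies outside the hypothesis of Corollary~\ref{cor:incompressibleDescription}. Here one must invoke the full $\max$-formulation of Proposition~\ref{prop:extractorSmallComplexity} together with symmetry of information to show that $F_\ell$ still retains roughly $\min(k_\ell, \C(x_\ell|z'))$ bits about $x_\ell$. It is precisely because the same $\min$ appears in the upper bound on $\C(x_S, F_\ell | z')$ that the two quantities cancel, making the lemma go through uniformly in both regimes.
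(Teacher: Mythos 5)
Your argument is correct and uses the same ingredients as the paper (Proposition~\ref{prop:extractorSmallComplexity}, symmetry of information, and the hypothesis~\eqref{eq:conditionsSlepianWolf} for both $S$ and $S\cup\{\ell\}$, with a union bound over subsets at failure level $\epsilon 2^{-O(\ell)}$ per subset), but it is organized differently. The paper splits into two cases: when $k_\ell \ge \C(x_\ell|x_T,z)$ it avoids the hard direction of symmetry of information altogether, simply replacing $x_\ell$ by $F_\ell$ in the condition since $\C(x_\ell|F_\ell,x_T,z)$ is small; when $k_\ell < \C(x_\ell|x_T,z)$ it uses the symmetry-of-information split together with Corollary~\ref{cor:incompressibleDescription}. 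You instead run a single symmetry-of-information computation in which the quantity $\min\bigl(k_\ell,\C(x_\ell|z')\bigr)$ appears in both the upper bound on $\C(x_S,F_\ell|z')$ and the lower bound on $\C(F_\ell|z')$ and cancels; your lower bound $\C(F_\ell|z') \ge \min\bigl(k_\ell,\C(x_\ell|z')\bigr) - O(\log^2\tfrac{|x_\ell|}{\epsilon})$, derived by subadditivity from the $\max$-formulation of Proposition~\ref{prop:extractorSmallComplexity}, is in fact a slight strengthening of Corollary~\ref{cor:incompressibleDescription} that covers the regime $\C(x_\ell|z') < k_\ell$ outside that corollary's hypothesis, which is what lets you dispense with the case split. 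The payoff of your version is uniformity (one chain of inequalities, no cases); the payoff of the paper's version is that one of the two regimes needs only trivial subadditivity. Parameters come out the same; the only caveat, which you share with the paper rather than introduce, is that absorbing the $O(\ell^2)$ term coming from the per-subset $\epsilon 2^{-O(\ell)}$ into $b = c\ell\log^2\tfrac{|x_i|+2}{\epsilon}$ implicitly uses that large $\ell$ (say $\ell > |x_i|$) makes the overall theorem trivial.
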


\begin{proof}
  We prove the lemma for the special case that its assumption \eqref{eq:conditionsSlepianWolf} 
  has no $z$ in the condition.
  The proof of the conditional version follows the unconditional one.
  Let $S \subseteq [\ell-1]$ and $T = [\ell-1] \setminus S$. Let $n$ be the length of the strings. 
  Let $b$ be large enough so that it exceeds the $O(\cdot)$-terms for $\varepsilon = \epsilon2^{-\ell}$  
  in Proposition~\ref{prop:extractorSmallComplexity} and Corollary~\ref{cor:incompressibleDescription}. 
  We show that 
  \[
  \C(x_S|x_T, {F}_\ell) \le \sum_{i \in S} k_i + O(b)
  \]
  with probability~$1-\epsilon 2^{-\ell}$.  The lemma follows by applying this for all $T \subseteq [\ell]$ 
  and after a redefinition of~$b$.

  First suppose that $k_\ell \ge \C(x_\ell|x_T)$, 
  then $\C(x_\ell|F_\ell,x_T) \le b$ by Proposition~\ref{prop:extractorSmallComplexity}.
  Changing $x_\ell$ to ${F}_\ell$ in the condition of  \eqref{eq:conditionsSlepianWolf}, 
  changes the complexity at most by $O(b)$. In this case the equation above is proven.

  Now suppose that $k_\ell < \C(x_\ell|x_T)$. We apply symmetry of information.
  \[
    \C(x_S|x_T, {F}_\ell) = \C(x_S, {F}_\ell | x_T) - \C({F}_\ell|x_T) + O(\log (\ell n)).
  \]
  We can compute the value of ${F}_\ell$ from $x_\ell$ using $\log D$
  bits of information, hence the first term in the right-hand side is bounded by 
  $\C(x_S, x_\ell|x_T) \le k_S + k_\ell$ up to an $O(b)$-term. 
  By our assumption and  Corollary~\ref{cor:incompressibleDescription}, we have that the 
  second term is less than $-k_\ell$.
  We conclude that the left-hand side is bounded by $ (k_S + k_\ell) - k_\ell + O(b)$.
\end{proof}

\begin{proof}[Proof of  Theorem~\ref{th:SlepianWolf}.]
  Let $b$ be the value of Lemma~\ref{lem:SlepianWolfInductionStep}, and let $\tilde{\alg{D}}$ and $\tilde{\alg{E}}$ 
  be the algorithms given by  Theorem~\ref{th:fastCompression}.

  On input an $\ell$-tuple of strings $(E_1,\dots,E_\ell)$, and $z$, {\em algorithm $\alg{D}$} 
  outputs the $\ell$-tuple containing the results of running $\tilde{\alg D}$ 
  on the arguments $E_i$ and $({E}_{[\ell] \setminus i},z)$ for~$i = 1,\dots, \ell$. 
  On input $\varepsilon,\ell,k,x$,  {\em algorithm $\alg E$} runs $\tilde{\alg E}$ 
  on inputs $\varepsilon, k + (\ell - 1)b$ and~$x$. 

  We show that this works. By $\ell-1$ applications of  Lemma~\ref{lem:SlepianWolfInductionStep} we have 
  \[
  \Pr \left\{ \C(x_i | {F}_{[\ell] \setminus i}) \le k_i + (\ell-1)b \right\} \;\; \ge \;\; 1-(\ell-1)\varepsilon
  \]
  This implies that for a fixed $i$, on input $E_i$ and ${F}_{[\ell] \setminus i}$,  
  algorithm $\tilde{\alg{D}}$ fails to return $x_i$ with probability at most~$\ell\varepsilon$. 
  Hence, the output of $\alg D$ equals $(x_1,\dots, x_\ell)$ 
  with probability $1-\ell^2\varepsilon$ and the result follows after rescaling~$\varepsilon$. 
  Observe that if $\ell > n$, the theorem is trivially true (because $\alg E(x) = x$ works), thus this rescaling 
  does not change the $O(\ell \log^2 (n/\varepsilon))$-term.
\end{proof}

\bibliography{eigen,kolmogorov}
\end{document}